\DeclareMathAlphabet\mathbfcal{OMS}{cmsy}{b}{n}
\newcommand{\ket}[1]{\ensuremath{|#1\rangle}}
\newcommand{\bra}[1]{\ensuremath{\langle #1|}}
\newcommand{\proj}[1]{\ket{#1}\bra{#1}}
\newcommand{\be}{\begin{equation}}
\newcommand{\ee}{\end{equation}}
\newcommand{\ba}{\begin{eqnarray}}
\newcommand{\ea}{\end{eqnarray}}
\newcommand{\norm}[1]{\left\|#1\right\|}
\newcommand{\id}{\mathbb{I}}
\newtheorem{theorem}{Theorem}
\newtheorem{question}{Question}
\definecolor{nred}{rgb}{0.9,0.1,0.1}
\definecolor{nblack}{rgb}{0,0,0}
\definecolor{nblue}{rgb}{0.2,0.2,0.8}
\definecolor{ngreen}{rgb}{0.2,0.6,0.2}
\definecolor{ublue}{rgb}{0,0,0.5}
\definecolor{pur}{rgb}{0.75,0,0.75}
\definecolor{nngrn}{rgb}{0,0.5,0.5}
\definecolor{CitingColor}{rgb}{0,0.3,1}
\newcommand{\blu}{\color{nblue}}
\newcommand{\CY}[1]{{\color{black}#1}}
\newcommand{\CYnewtwo}[1]{{\color{blue}#1}}
\begin{document}
\title{Dynamical Landauer Principle: Quantifying Information Transmission by Thermodynamics}

\author{Chung-Yun Hsieh}
\email{chung-yun.hsieh@bristol.ac.uk}
\affiliation{H.H. Wills Physics Laboratory, University of Bristol, Tyndall Avenue, Bristol BS8 1TL, United Kingdom}
\affiliation{ICFO - Institut de Ci\`encies Fot\`oniques, The Barcelona Institute of Science and Technology, 08860 Castelldefels, Spain}

\date{\today}

\begin{abstract}
Energy transfer and information transmission are two fundamental aspects of nature.
They are seemingly unrelated, while 
recent findings suggest that a deep connection between them is to be discovered. 
This amounts to asking:
\emph{Can we phrase the processes of transmitting classical bits equivalently as specific energy-transmitting tasks, thereby uncovering foundational links between them?}
We answer this question positively by showing that, for a broad class of classical communication tasks, a quantum dynamics' ability to transmit $n$ bits of classical information \emph{is equivalent to} its ability to transmit $n$ units of energy in a thermodynamic task.
This finding not only provides an analytical correspondence between information transmission and energy extraction tasks, but also quantifies classical communication by thermodynamics. 
Furthermore, our findings uncover the \emph{dynamical version} of Landauer's principle, showing the strong link between transmitting information and energy.
In the asymptotic regime, our results further provide thermodynamic meanings for the well-known Holevo-Schumacher-Westmoreland Theorem in quantum communication theory.
\end{abstract}

\maketitle

\section{Introduction}
Energy transfer and information transmission are two central and dynamic aspects of our everyday lives.
They underpin nature's functionalities and act as the keys to the development of sciences and technologies.
Their importance motivated scientists to build theoretical frameworks to describe them quantitatively.
From here, {\em thermodynamics}, one of our most successful theories of physics, enables us to understand energy in the forms of heat and work; {\em communication theory}, on the other hand, provides us with an analytical method to measure the abilities to transmit classical bits in different contexts.

Despite transmitting information and energy sound unrelated, there are hints that they should be closely related.
For instance, various practical communication methods use light signals to send classical bits.
By identifying light signals as certain forms of energy, a connection between energy transmission and {\em classical communication} (i.e., transmission of classical bits) is thus expected.
Clues of such a link also appeared in the theoretical ground.
Back in 1961, Landauer's seminal paper~\cite{Landauer1961} told us the thermodynamic cost of {\em erasing} one bit of information is at least $k_BT\ln2$ amount of work, revealing the link between energy and information contents, known as {\em Landauer's principle}.
Since then, scientists have investigated the implications of Landauer's principle and other thermodynamic effects on communication theory~\cite{Plenio1999,Plenio2001,Schumacher2002,Maruyama2005}. 
Nevertheless, whether one can phrase classical communication tasks equivalently as thermodynamic ones and hence bridge these two fields is still an open poblem~\cite{Plenio1999,Plenio2001,Schumacher2002,Maruyama2005,Hsieh2020,Hsieh2021}.

Armed with the recently developed tools, we eventually have the chance to solve this problem.
Since 2019, the trend of studying quantum resources possessed by {\em dynamics} provides novel tools to build the link between communication and thermodynamics.
Several recent findings from different perspectives, such as the thermodynamic effect in classical communication~\cite{Hsieh2020,Hsieh2021} and quantum memory~\cite{Narasimhachar2019} (see also Refs.~\cite{Plenio1999,Plenio2001,Schumacher2002,Maruyama2005}), classical information encoding under thermodynamic constraints~\cite{Korzekwa2019,Biswas2021}, and energy cost of information processing~\cite{Faist2015,Faist2018,Chiribella2021} jointly suggest that a fundamental and strong connection between classical information transmission and thermodynamics is to be discovered.

In this work, we provide a thermodynamic quantification, in terms of energies, of classical communication tasks (Fig.~\ref{Fig:Question}).
Foundationally, such a link can not only provide new insights to identify the roles of thermodynamics in information transmission and vice versa, but also address the dynamical counterpart of Landauer's finding.
From the practical aspects, our results can potentially shed new light on developing quantum science and technologies (see, e.g., Ref.~\cite{Auffeves2022}).

\begin{figure}
\begin{center}
\scalebox{0.8}{\includegraphics{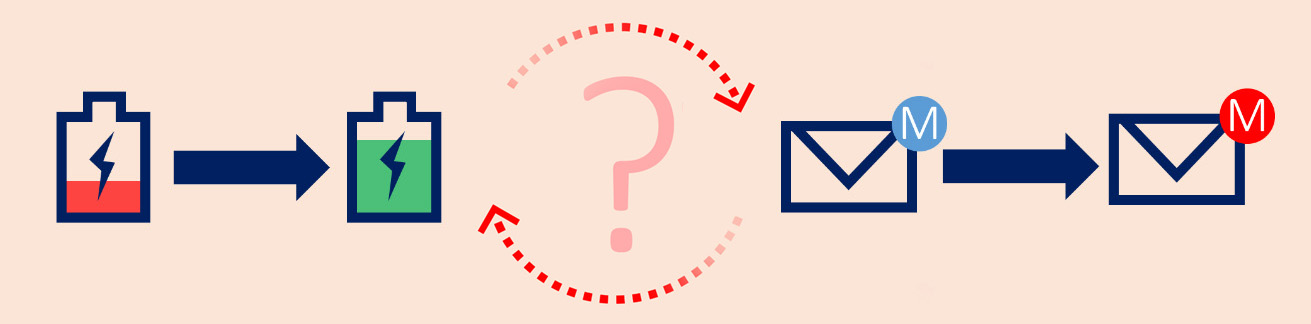}}
\caption{
{\bf Main question.}
We aim to seek the foundational relations between transmitting classical information and work-like energy.
}
\label{Fig:Question} 
\end{center}
\end{figure}

Notably, our results differ from characterising the thermodynamic cost of {\em executing quantum dynamics}.
We use two conceptual observations to illustrate the difference.
First, according to Landauer's principle for information erasure~\cite{Landauer1961,Bennett2003,delRio2011,Taranto2021,Reeb2014,Janzing2000,Sagawa2009,Maruyama2009,Anders2010,
Hilt2011,Tajima2013}, we need a high energy cost to realise the dynamics preparing a fixed pure state (i.e., a so-called state-preparation channel of some pure state) with a maximally mixed input.
However, this dynamics totally destroys the information possessed by the input state and consequently has zero ability to transmit information.
Hence, despite having a high energy cost, this energy is not consumed for information transmission.
This observation suggests that the physics behind the thermodynamic quantification of transmitting information is different from the effects of Landauer's principle.
The second observation is to examine the role of the identity map.
As reported in Ref.~\cite{Faist2015} (see also Refs.~\cite{Faist2018,Chiribella2021}), the work cost to execute a dynamics is related to the amount of logically discarded information.
Hence, the identity map is free and zero-cost to apply in this context.
On the other hand, however, the identity map gives the best communication performance and is the most valuable resource in any nontrivial communication task.
Hence, the physics behind the thermodynamic quantification of information transmission conceptually differs from the energy cost of a quantum process~\CYnewtwo{\cite{Faist2015,Faist2018,Chiribella2021,Chiribella2022}}.

Since we aim to bridge the transmissions of information and work-like energy (Fig.~\ref{Fig:Question}), we start with formalising them.

\section{Results}

\subsection{Classical communication tasks of quantum channels}
Classical information can be described by a set of integers $\{m\}_{m=0}^{M-1}$.
Intuitively, if one can use physical processes to send this data set to a remote agent reliably, it is then possible for them to {\em communicate with} each other.
Hence, the goal is to formally describe how to transmit the classical data through a quantum dynamics, denoted by $\mathcal{N}$.
Mathematically, $\mathcal{N}$ is called a {\em channel}, the so-called completely-positive trace-preserving linear maps~\cite{QIC-book}.
To this end, we need to first {\em encode} the classical information into a set of quantum states $\{\rho_{m}\}_{m=0}^{M-1}$.
The channel can then act on them and proceed with the transformation.
After that, the receiver needs to {\em decode} the classical information from the output quantum system.
This can be achieved by applying a measurement, described by a {\em positive operator-valued measure} (POVM)~\cite{QIC-book} $\{E_m\}_{m=0}^{M-1}$ with $\sum_{m=0}^{M-1}E_{m}=\id$ and $E_m\ge0$ in the output space.
If the receiver's measurement outcome coincides with the originally encoded classical index, the transmission is successful.
In general, apart from the encoding and decoding, we can further consider an additional assistance structure.
This can be formally written as a set of {\em superchannels}~\cite{Chiribella2008,Chiribella2008-2}, denoted by $\Theta$, where a superchannel is a linear map bringing a channel to another channel.
We allow the sender and receiver to use some superchannel $\Pi$ from $\Theta$ to assist the communication.
This amounts to replacing the transmission channel $\mathcal{N}$ with $\Pi(\mathcal{N})$.
For every $m$, the transformation looks like
$
m\mapsto{\rm tr}\left[E_m\Pi(\mathcal{N})(\rho_m)\right].
$
We call this a {\em $\Theta$-assisted scenario}.
In the literature, a standard measure of $\mathcal{N}$'s performance in such a scenario is the highest amount of classical data that can be correctly transmitted up to some acceptable error.
This is called {\em one-shot $\Theta$-assisted classical capacity with error $\epsilon$}, which is given by [see Figure~\ref{Fig:SchematicInterpretation} (a) for a schematic illustration]
\begin{align}\label{Eq:Theta capacity}
C_{\Theta,(1)}^\epsilon(\mathcal{N})\coloneqq\max_{\Pi\in\Theta,\{\rho_m\},\{E_m\}}\log_2M,
\end{align}
where the maximisation is taken over every $\Theta$-assisted scenario with the average success probability higher than $1-\epsilon$, i.e., 
$
\sum_{m=0}^{M-1}\frac{1}{M}{\rm tr}\left[E_m\Pi(\mathcal{N})(\rho_m)\right]\ge1-\epsilon.
$
(see Appendix I for details; in the companion paper~\cite{CompanionPRA2}, we provide a complete mathematical treatment for pedagogical purposes).
$C_{\Theta,(1)}^\epsilon(\mathcal{N})$ tells us the largest size of communicable classical data via $\mathcal{N}$ with assistance from $\Theta$, up to probability at most $\epsilon$ to wrongly decode the classical information.
It is called {\em one-shot} because the channel $\mathcal{N}$ only serves {\em once}, contrary to the {\em asymptotic limit} involving infinitely many copies of $\mathcal{N}$.

Several existing scenarios are included as special cases.
First, when $\Theta = \Theta_{\rm C} \coloneqq \{\Pi(\mathcal{N}) = \mathcal{N}\;\forall\mathcal{N}\}$ containing only identity superchannels, it gives the (standard) one-shot classical capacity $C_{\Theta_{\rm C},(1)}^\epsilon = C_{(1)}^\epsilon$~\cite{Wang2013,Renes2011}. 
It characterises the channel's ``primal'' ability to transmit classical information.
When $\Theta = \Theta_{\rm LOSE}$ is the set of {\em local operation plus pre-shared entanglement} (LOSE) superchannels~\cite{Beckman2001}, it gives the one-shot {\em entanglement-assisted classical communication}~\cite{Datta2013,Anshu2019,Matthews2014}.
Finally, when $\Theta=\Theta_{\rm NS}$ containing all no-signaling superchannels, it is the one-shot {\em non-signalling assisted classical communication}~\cite{Beckman2001,Eggeling2002,Piani2006,Duan2016,Takagi2020}.

\begin{figure}
\begin{center}
\scalebox{0.8}{\includegraphics{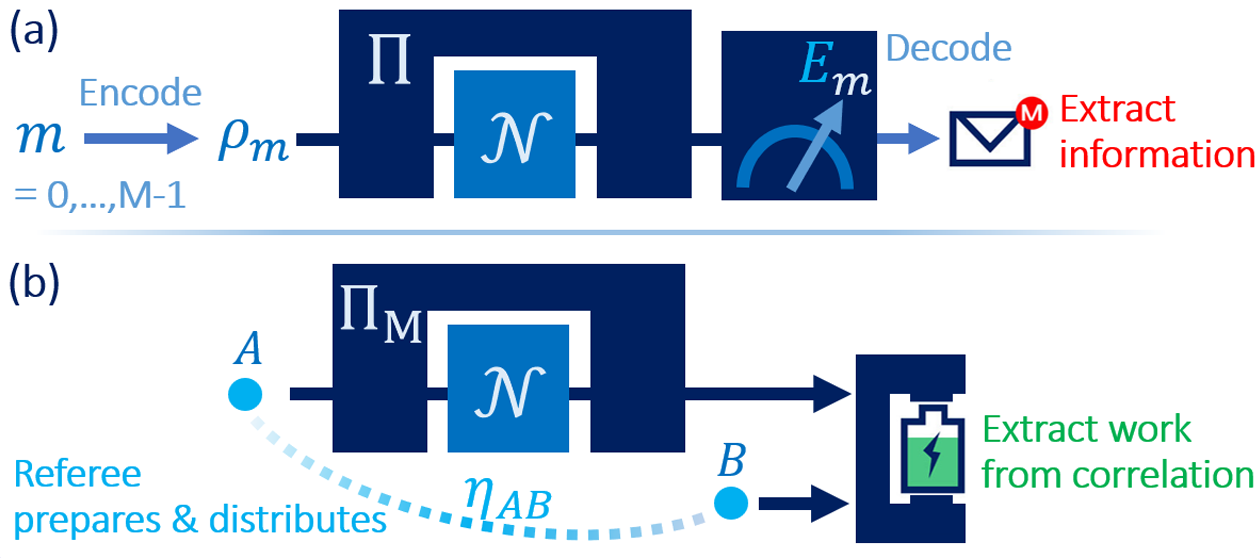}}
\caption{
{\bf Operational tasks.}
(a) The one-shot $\Theta$-assisted classical capacity describes a communication task where the sender encodes classical data $\{m\}_{m=0}^{M-1}$ into states $\{\rho_m\}_{m=0}^{M-1}$.
After transmission via $\mathcal{N}$ assisted by some $\Pi\in\Theta$, the receiver decodes the transmitted classical data by a measurement $\{E_m\}_{m=0}^{M-1}$.
(b) The one-shot $\Theta$-assisted $\epsilon$-deterministic genuinely transmitted energy measures the energy that can {\em only} be transmitted by $\mathcal{N}$.
The referee prepares $\eta_{AB}$ and distributes it to the sender and receiver. 
The sender sends their part $A$ to the receiver via $\Pi_M(\mathcal{N})$, a classical channel induced by $\mathcal{N}$ and $\Pi\in\Theta$. 
Then, the receiver extracts work from the bipartite correlation. 
This can be done by changing Hamiltonians without changing states (i.e., {\em quenches}) to make the bipartite output locally thermal and extracting work from that bipartite state (detailed in Appendix II).}
\label{Fig:SchematicInterpretation} 
\end{center}
\end{figure}

\subsection{Energy transmission tasks of quantum channels}
To quantify information transmission thermodynamically, we now use an operational task to analyse the (work-like) energy that is {\em definitely transmitted} by a channel $\mathcal{N}$.
Consider a setting with referee, sender, and receiver [Figure~\ref{Fig:SchematicInterpretation} (b)]. 
At the beginning, the referee prepares a bipartite state $\eta_{AB}$ diagonal in a given computational basis \mbox{$\{\ket{m}_A\otimes\ket{n}_{B}\}_{n,m=0}^{M-1}$} with some initial local Hamiltonians (both $A,B$ are $M$-dimensional).
Apart from being of finite-energy, there is no other restriction on these Hamiltonians.
The referee distributes $\eta_{AB}$ to the sender (with ``part $A$'') and receiver (with ``part $B$'').
Then, the sender uses a {\em classical} channel in $A$ (i.e., its outputs are always diagonal in $\{\ket{m}_A\}_{m=0}^{M-1}$) induced by $\mathcal{N}$ and $\Theta$ to locally transmit $\eta_{AB}$'s part $A$ to the receiver [see also Eq.~\eqref{Eq:ClassicalVersion} in Appendix I]. 
After that, the receiver possesses a bipartite state. 
Now, we aim to identify the amount of energy that is {\em transmitted}, rather than {\em created} by the channel.
For instance, when $\mathcal{N}$ is an erasure channel $(\cdot)\mapsto\proj{0}$, it can easily {\em create} extractable energy that is clearly not transmitted at all.
Hence, generally, the extractable work from part $A$ does not {\em solely} result from transmission.
The extractable work from part $B$ is also ruled out—it was with the receiver {\em before} even applying the channel.
To isolate the {\em genuinely transmitted} energy, the receiver thus needs to check the extractable work from the output's {\em bipartite correlation}---as this energetic contribution {\em is not} obtainable by the receiver {\em before} the channel, and it {\em cannot} be generated by channels locally acting on $A$ (see companion paper~\cite{CompanionPRA2} for a detailed proof).
To measure this, we adopt \AA berg's approach~\cite{Aberg2013} to write the (one-shot $\epsilon$-deterministic) extractable work from a state $\eta_{AB}$'s correlation as $W_{\rm corr,(1)}^\epsilon(\eta_{AB})$, which isolates the genuinely global contribution to work-like energy, as local ones have been deduced [see Appendix II and Eq.~\eqref{Eq:WcorrState} for details].
Then, we measure the work-like energy that can {\em only} result from transmission by the {\em one-shot $\Theta$-assisted \mbox{$\epsilon$-deterministic} genuinely transmitted energy} of $\mathcal{N}$, which is defined by
\begin{align}\label{Eq:ThermoTask}
W^{\epsilon}_{\rm corr|\Theta,(1)}(\mathcal{N})\coloneqq\sup_{\substack{M\in\mathbb{N},\eta_{AB}\\\Pi_M\in\Theta_M}}W_{\rm corr,(1)}^\epsilon\left[(\Pi_M(\mathcal{N})_A\otimes\mathcal{I}_{B})(\eta_{AB})\right],
\end{align} 
where 
$
\Theta_M\coloneqq\left\{\mathcal{E}_{{\rm de}|M}\circ\Pi(\cdot)\circ\mathcal{E}_{{\rm en}|M}\,\middle|\,\Pi\in\Theta,\mathcal{E}_{{\rm de}|M},\mathcal{E}_{{\rm en}|M}\right\}.
$
Here, $\mathcal{E}_{{\rm en}|M}:A\to S_{\rm in}$ ($\mathcal{E}_{{\rm de}|M}:S_{\rm out}\to A$) is a {\em classical-to-quantum} ({\em quantum-to-classical}) channel.
$S_{\rm in}$ ($S_{\rm out}$) is $\Pi(\mathcal{N})$'s input (output) space.
$\Pi_M(\mathcal{N})$'s are classical channels in $A$ induced by $\mathcal{N}$ and $\Theta$.
We also call them {\em classical versions of $\mathcal{N}$ (assisted by $\Theta$)}.
See Appendix I for details.

\subsection{Quantifying communication by energy transmission}
As our first main result, we report the following bounds, whose complete version is given in Appendix III (Theorem~\ref{Result:MainResult}):
{\em For a channel $\mathcal{N}$, a background temperature \mbox{$0<T<\infty$}, and errors $0<\delta\le\omega<\epsilon\le1-1/\sqrt{2}$, we have that}
\begin{align}\label{Eq:MainResultInformalForm}
 W_{\rm corr|\Theta,(1)}^\omega(\mathcal{N}) \lesssim (k_BT\ln2)C_{\Theta,(1)}^\epsilon(\mathcal{N})\le W_{\rm corr|\Theta,(1)}^{\epsilon+\delta}(\mathcal{N}).
\end{align}
The inequality ``$\lesssim$'' is up to one-shot error terms \mbox{$\sim O\left(\log_2(1/\epsilon)\right)$}---contributions from such terms vanish when we consider sufficiently many (but finite) copies of the channel and average the figure-of-merits over the number of copies.
In this sense, two one-shot figure-of-merits can carry the same physical meaning if their difference is a one-shot error term.
Hence, we conclude that (see also Fig.~\ref{Fig:Result})
\begin{center}
{\em 
$W_{\rm corr|\Theta,(1)} \approx (k_BT\ln2)C_{\Theta,(1)}$ up to one-shot error terms.
}
\end{center}
This means a broad class of classical communication scenarios, as characterised by $C_{\Theta,(1)}$, are {\em physically equivalent to} energy transmission tasks described by $W_{\rm corr|\Theta,(1)}$.
This gives the first quantitative correspondence between one-shot communication and small-scale thermodynamics.

\begin{figure}
\begin{center}
\scalebox{0.8}{\includegraphics{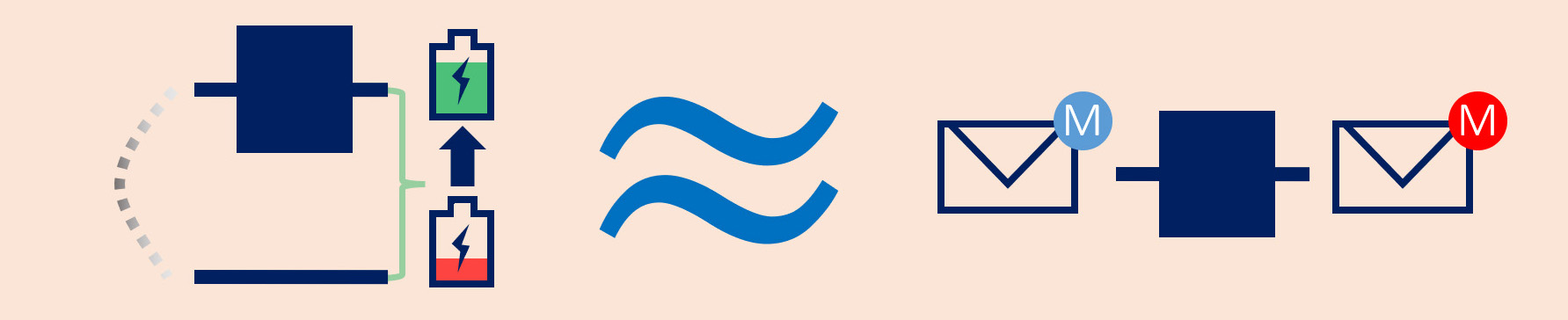}}
\caption{
{\bf Main result.}
The abilities to transmit information and energy are equivalent for quantum dynamics.}
\label{Fig:Result} 
\end{center}
\end{figure}

\subsection{Dynamical Landauer's principle}
Importantly, Eq.~\eqref{Eq:MainResultInformalForm} uncovers a {\em dynamical version} of Landauer's principle~\cite{Landauer1961}. 
Loosely speaking, Landauer's principle states that preparing a qubit pure state from the maximally mixed one must be accompanied by at least $k_BT\ln2$ energy cost.   
Together with Szilard engine~\cite{Szilard1929}, they equate informational and energetic properties of states---a state carries one bit of deterministic information (e.g., a qubit pure state) {\em if and only if} it possesses one unit of extractable work (i.e., $k_BT\ln2$).
Here, we equate informational and energetic properties of dynamics, which are the abilities to transmit information and energy.
To see this, consider again the $\Theta$-assisted scenario (Fig.~\ref{Fig:SchematicInterpretation}).
For a given dynamics (described by a channel $\mathcal{N}$), when we view it information-theoretically as a communication channel, it can transmit $n$ bits of information if and only if $n\le C_{\Theta,(1)}^\epsilon(\mathcal{N})$.
In the same physical setting, we can also view it thermodynamically as an energy-transmitting process [as in Eq.~\eqref{Eq:ThermoTask}]. Then, Eq.~\eqref{Eq:MainResultInformalForm} implies that, necessarily and sufficiently, the same dynamics must be able to transmit $n\times(k_BT\ln2)$ energy (up to one-shot error terms).
Hence, we conclude that:
\begin{center}
{\em The ability to transmit $n$ bits of information is equivalent to the ability to transmit $n\times(k_BT\ln2)$ energy.}
\end{center}

Crucially, the equivalence between the {\em abilities} to do two things may not imply the equivalence between these two things. 
We now argue that transmitting information and energy can happen {\em simultaneously} and can be two facets of the {\em same} physical process.
Here, we provide a back-of-the-envelope argument, and the complete mathematical proof is allocated in the companion paper~\cite{CompanionPRA2}. 
Consider a channel $\mathcal{N}$ in the energy transmission task [Fig.~\ref{Fig:SchematicInterpretation} (b)] where the referee prepares 
\mbox{$\eta_{AB}=\Phi_{AB} \coloneqq \frac{1}{M}\sum_{m=0}^{M-1}\proj{m}_{A}\otimes\proj{m}_{B}$}.
Physically, this state is prepared as a statistical mixture of \mbox{$\ket{m}_A\otimes\ket{m}_B$} in a multi-trials experiment---during each trial, with (uniform) probability $1/M$, the referee prepares and sends $\ket{m}_A$ ($\ket{m}_B$) to the sender (receiver).
We now argue that, in this setting, transmitting $n = \log_2M$ bits of information {\em must} be accompanied by $n\times(k_BT\ln2)$ transmitted energy.
First, transmitting $n$ bits of information means there is some $\Pi_M$ such that $\Pi_M(\mathcal{N})(\proj{m})\approx_\epsilon\proj{m}$ \mbox{for every $m$} [``$\approx_\epsilon$'' means it is up to terms $\sim O(\epsilon)$].
Hence, for most trials, the receiver's bipartite outputs $\approx_\epsilon\ket{m}\otimes\ket{m}$.
After sufficiently many trials, the receiver's bipartite output can be described by a statistical mixture $\approx_\epsilon\Phi_{AB}$.
This is a {\em necessary} condition if $\mathcal{N}$ transmits $n$ bits of information in the current setting.
To see if energy is transmitted, one can implement work extraction protocols {\em after} the information transmission.
The extractable work from $\Phi_{AB}$'s correlation is $n\times(k_BT\ln2)$ [see Eq.~\eqref{Eq:WcorrState}]. 
This is, up to terms $\sim O(\epsilon)$, the energy {\em definitely transmitted} by $\mathcal{N}$, as we argue above Eq.~\eqref{Eq:ThermoTask}. 
Hence, $\mathcal{N}$ is transmitting information {\em and} energy {\em at the same time}, and the amount of transmitted energy is {\em lower bounded} by the amount of transmitted information.
From here, we obtain a truly work-like, genuinely dynamical version of Landauer's principle:
\begin{center}
{\em
In the above setting, transmitting $n$ bits of information must be accompanied by transmitting $n\times(k_BT\ln2)$ energy.} 
\end{center}
Notably, the transmitted energy is extracted by additional processes {\em after} information transmission. 
Hence, interestingly:
\begin{center}
{\em We can convert information transmission into energy transmission.}
\end{center}
This can be viewed as a dynamical counterpart of the famous information-to-work conversion via Szilard engine~\cite{Szilard1929}.
In other words, the energy transmission is {\em mediated by} information.
We thus fully answer this work's central question.

\subsection{Implications of thermodynamics for communication, and implications of communication for thermodynamics}
We discuss some implications of Eq.~\eqref{Eq:MainResultInformalForm}.
When $\Theta = \Theta_{\rm C}$, both upper and lower bounds in Eq.~\eqref{Eq:MainResultInformalForm} are potentially tighter than the ones reported in Ref.~\cite{Wang2013} (see also Theorem~2 in companion paper~\cite{CompanionPRA2}).
When $\Theta = \Theta_{\rm LOSE}$, the corresponding one-shot classical capacity is called {\em entanglement-assisted}, denoted by $C_{\rm EA,(1)}^\epsilon$~\cite{Datta2013,Anshu2019,Matthews2014}.
Equation~\eqref{Eq:MainResultInformalForm} implies that, up to one-shot error terms, $C_{\rm EA,(1)}^\epsilon(\mathcal{N})\approx W_{\rm corr|\Theta_{\rm LOSE},(1)}^\epsilon(\mathcal{N})$, which is the transmitted energy assisted by LOSE superchannels.
Using this, Eq.~\eqref{Eq:MainResultInformalForm} thus gives $C_{\rm EA,(1)}^\epsilon$'s entropic bounds reported in Refs.~\cite{Datta2013,Anshu2019,Matthews2014} thermodynamic meanings in 
energy transmission.
The same argument also works when $\Theta = \Theta_{\rm NS}$, and the one-shot {\em non-signalling-assisted} classical capacity $C_{\rm NS,(1)}^\epsilon$ is related to $W_{\rm corr|\Theta_{\rm NS},(1)}^\epsilon(\mathcal{N})$, offering a link between thermodynamics and the entropic bounds found in Refs.~\cite{Beckman2001,Eggeling2002,Piani2006,Duan2016,Takagi2020}.
Finally, once we know how to estimate $C_{\Theta,(1)}^\epsilon$, Eq.~\eqref{Eq:MainResultInformalForm} provides a way to estimate $W^{\epsilon}_{\rm corr|\Theta,(1)}$; namely, communication-theoretic results can 
apply to thermodynamic questions.

\subsection{Asymptotic regime and thermodynamic meaning of Holevo-Schumacher-Westmoreland theorem}
After fully addressing the one-shot cases, let us now go to the asymptotic regime.
Considering multi-copy and taking the average, we can recover the {\em asymptotic} classical capacity assisted by different structures $\Theta$.
Thus, Eq.~\eqref{Eq:MainResultInformalForm} can also provide thermodynamic bounds in the asymptotic limit.
First, we define {\em $\Theta$-assisted classical capacity} as
\mbox{$
C_\Theta(\mathcal{N})\coloneqq\lim_{\epsilon\to0}\varliminf_{k\to\infty}\frac{1}{k}C_{\Theta,(1)}^\epsilon\left(\mathcal{N}^{\otimes k}\right).
$}
This notion recovers various scenarios such as the standard, entanglement-assisted, and non-signalling assisted scenarios by selecting different $\Theta$.
Now, one can apply the independent identically distributed (iid) limit and obtain the asymptotic version of Eq.~\eqref{Eq:MainResultInformalForm}:
{\em For a channel $\mathcal{N}$ and a background temperature $0<T<\infty$, we have that}
\begin{align}\label{Eq:MainResultAsympVer}
(k_BT\ln2)C_\Theta(\mathcal{N}) = W_{\rm corr|\Theta}(\mathcal{N}),
\end{align}
where
$W_{\rm corr|\Theta}(\mathcal{N})\coloneqq\lim_{\epsilon\to0}\varliminf_{k\to\infty}\frac{1}{k}W_{\rm corr|\Theta,(1)}^\epsilon\left(\mathcal{N}^{\otimes k}\right)$
can be understood as the average transmitted energy in the asymptotic limit.
Hence, the physical messages we obtained in the one-shot regime can directly apply to the asymptotic limit as an {\em equality} without any error term.
To illustrate \CY{Eq.~\eqref{Eq:MainResultAsympVer}}, we use the standard classical capacity as an example, and the same argument works for other choices of $\Theta$.
Setting $C_{\Theta_{\rm C}} = C$, the classical capacity is then given by $C(\mathcal{N})\coloneqq\lim_{\epsilon\to0}\varliminf_{k\to\infty}\frac{1}{k}C_{(1)}^\epsilon\left(\mathcal{N}^{\otimes k}\right)$~\cite{Wang2013}.
Then Eq.~\eqref{Eq:MainResultAsympVer} implies that 
\begin{align}\label{Eq:ThermoHSW}
(k_BT\ln2)C(\mathcal{N})= W_{\rm corr|\Theta_{\rm C}}(\mathcal{N}).
\end{align}
Equation~\eqref{Eq:ThermoHSW} can be interpreted as a thermodynamic version of the famous {\em Holevo-Schumacher-Westmoreland} (HSW) Theorem~\cite{Holevo1973,Holevo1998,Schumacher1997,Wilde-book}, which equates classical capacity and the regularised version of Holevo information.
Here, as a consequence of Eq.~\eqref{Eq:MainResultAsympVer}, the asymptotic transmitted energy $W_{\rm corr|\Theta_{\rm C}}$ {\em equals} the classical capacity in the unit of $k_BT\ln2$.

\subsection{No-go results for classical communication with non-equilibrium constraints}
To illustrate another physical implication of Eq.~\eqref{Eq:MainResultInformalForm}, let us consider an alternative figure-of-merit $C_{(1)|\text{$\theta$-equi}}^{\epsilon}$ with a parameter \mbox{$0\le\theta\le1/2$}, which is $C_{(1)}^\epsilon$ subject to an additional thermodynamic constraint \mbox{$\norm{\mathcal{E}_{{\rm de}|M}\circ\mathcal{N}\circ\mathcal{E}_{{\rm en}|M}(\id_M/M) - \id_M/M}_1\le2\theta$}. 
Namely, the encoding and decoding cannot drive the system out of thermal equilibrium more than $\theta$ under trace norm.
Here, $\id_M/M$ describes thermal equilibrium when the system Hamiltonian is fully degenerate, and any other state possesses the so-called {\em informational non-equilibrium}~\cite{Gour2015,Stratton2023,Hsieh2024-3}. 
Hence, the parameter $\theta$ controls the informational non-equilibrium $\mathcal{N}$ can generate via encoding and decoding.
Defining the asymptotic form as \mbox{$C_{\text{$\theta$-equi}}(\mathcal{N})\coloneqq\lim_{\epsilon\to0}\varliminf_{k\to\infty}\frac{1}{k}C_{(1)|\text{$\theta$-equi}}^\epsilon\left(\mathcal{N}^{\otimes k}\right)$.}
We further consider two different asymptotic figure-of-merits, which are \mbox{$C_{\rm max}(\mathcal{N})\coloneqq\sup_{0<\theta<1/2}C_{\text{$\theta$-equi}}(\mathcal{N})$} and \mbox{$C_{\rm min}(\mathcal{N})\coloneqq\lim_{\theta\to0}C_{\text{$\theta$-equi}}(\mathcal{N})$.}
Note that they have very different physical meanings.
$C_{\rm max}$ is the highest amount of transmissible classical information among all possible encoding/decoding's abilities to generate informational non-equilibrium.
On the other hand, $C_{\rm min}$ is the somewhat limited classical capacity requesting encoding/decoding to asymptotically have no ability to generate informational non-equilibrium.
Hence, by construction, we have \mbox{$C_{\rm min}\le C_{\rm max}$}.
When the opposite inequality also holds, we say the {\em strong converse property}~\cite{Takagi2020} occurs in this setting.
Interestingly, for $2\epsilon\le1-1/\sqrt{2}$, we can prove that \mbox{$C_{(1)}^\epsilon(\mathcal{N})\le C_{(1)|\text{$\theta$-equi}}^{\epsilon}(\mathcal{N})\le W_{\rm corr|\Theta_{\rm C},(1)}^{2\epsilon}(\mathcal{N})/(k_BT\ln2)$} {\em for every $\epsilon\le\theta<1/2$} (see Proposition~2 in the companion paper~\cite{CompanionPRA2}).
Using Eq.~\eqref{Eq:MainResultInformalForm}, we obtain $C_{\text{$\theta$-equi}}(\mathcal{N}) = C(\mathcal{N})$ {\em for every $0<\theta<1/2$}. 
Consequently, for every channel $\mathcal{N}$,
\begin{align}
C_{\rm min}(\mathcal{N}) = C_{\rm max}(\mathcal{N}) = C(\mathcal{N}).
\end{align}
Hence, Eq.~\eqref{Eq:MainResultInformalForm} provides a general {\em no-go result}---out-of-equilibrium effect from encoding/decoding {\em cannot enhance} the ability to transmit classical information asymptotically in the present setting.
This also means that optimal information transmission does not require the ability to generate informational non-equilibrium---the {\em preservability} of informational non-equilibrium~\cite{Stratton2023,Hsieh2020,Hsieh2021} is the key resource for information transmission.
Finally, as another implication, one can use Eq.~\eqref{Eq:MainResultInformalForm} to prove HSW Theorem {\em with Holevo information subject to an additional thermodynamic constraint}, and similar strong converse property as well as no-go interpretation can be obtained.
See the companion paper~\cite{CompanionPRA2} for details.

\section{Discussions}
We remark that an early work by Plenio~\cite{Plenio1999} has mentioned the connection between Holevo bound and Landauer erasure process (see also the early observations of the connection between communication and thermodynamics in Refs.~\cite{Plenio2001,Schumacher2002,Maruyama2005}).
Here, we aim to bridge transmitting information and energy, which is different from the previous works.

Many open questions remain, and here we list a few.
First, recent results have addressed the carriable amount of classical information of a {\em state} under thermodynamic constraints~\cite{Narasimhachar2019,Korzekwa2019,Biswas2021}, and it is rewarding to explore the relation between these results and our findings.
Second, it would be interesting to further explore the link between thermodynamics and communication in the dynamical resource theory of informational non-equilibrium~\cite{Stratton2023,Hsieh2020,Hsieh2024-3} and device-independent frameworks~\cite{Chen2024PRL,Hsieh2024,Hsieh2024-2}.
Third, inspired by a recent work studying thermal operation subject to non-ideal baths with fluctuation~\cite{Shu2019}, it would be useful to discuss how robust our results can be when noises and fluctuation occur.
Finally, it is unknown whether the idea of stochastic distillation via post-selection~\cite{Hsieh2023,Ku2023,Ku2022,Hsieh2024,Hsieh2024-2,Regula2022Quantum,Regula2022PRL,Takagi2024PRA,Yuan2024PRL,Ji2024} and exclusion-type tasks~\cite{Ducuara2022,Ducuara2023,Ducuara2023-2,Hsieh2023-2,Stratton2024} can be applied to the relation of thermodynamics and communication and hence discover novel thermodynamic tasks.

\section{Acknowledgements}
We thank Antonio Ac\'in, Alvaro Alhambra, Stefan B$\ddot{\rm a}$uml, Philippe Faist, Yeong-Cherng Liang, Matteo Lostaglio, Jef Pauwels, Mart\'i Perarnau-Llobet, Bartosz Regula, Valerio Scarani, Gabriel Senno, Yaw-Shih Shieh, Paul Skrzypczyk, Jacopo Surace, Gelo Noel M. Tabia, Ryuji Takagi, Philip Taranto, and Armin Tavakoli for fruitful discussions.
We also thank the Quantum Thermodynamics Summer School (23-27 August 2021, Les Diablerets, Switzerland), organised by L\'idia del Rio and Nuriya Nurgalieva, for the inspirational environment that helped me to improve the early version of this work significantly.
We acknowledge support from ICFOstepstone (the Marie Sk\l odowska-Curie Co-fund GA665884), the Spanish MINECO (Severo Ochoa SEV-2015-0522), the Government of Spain (FIS2020-TRANQI and Severo Ochoa CEX2019-000910-S), Fundaci\'o Cellex, Fundaci\'o Mir-Puig, Generalitat de Catalunya (SGR1381 and CERCA Programme), the ERC Advanced Grant (on grants CERQUTE and FLQuant), the AXA Chair in Quantum Information Science, the Royal Society through Enhanced Research Expenses (on grant NFQI), and the Leverhulme Trust Early Career Fellowship (on grant ``Quantum complementarity: a novel resource for quantum science and technologies'' with number ECF-2024-310).

\section{Appendix}

\subsection{Appendix I: Classical communication via channels}
First, for every $m$, the mapping 
\mbox{$
m\mapsto{\rm tr}\left[E_m\Pi(\mathcal{N})(\rho_m)\right]
$}
can be equivalently described as (see the companion paper~\cite{CompanionPRA2} Section~II.A for details)
\begin{align}\label{Eq:InputOutput}
m\mapsto\bra{m}\left[\mathcal{E}_{{\rm de}|M}\circ\Pi(\mathcal{N}_S)\circ\mathcal{E}_{{\rm en}|M}\right](\proj{m})\ket{m},
\end{align}
where $\{\ket{m}\}_{m=0}^{M-1}$ is an orthogonal {\em basis} of an $M$-dimensional system (denoted by $A$), $\mathcal{E}_{{\rm en}|M}:{A}\to S_{\rm in}$ is a {\em classical-to-quantum} channel of the form $\sum_{m=0}^{M-1}\rho_m\bra{m}(\cdot)_{A}\ket{m}$, and $\mathcal{E}_{{\rm de}|M}:S_{\rm out}\to{A}$ is a {\em quantum-to-classical} channel of the form $\sum_{m=0}^{M-1}\proj{m}_{A}{\rm tr}\left[E_m(\cdot)_{S_{\rm out}}\right]$.
Here, $S_{\rm in}$ and $S_{\rm out}$ represent the input and output spaces of $\Pi(\mathcal{N})$, respectively.
Also, the term ``classical'' means that only diagonal contribution (in the system {$A$}) matters.
Hence, to understand $\mathcal{N}$'s ability to send classical data with size $M$, it suffices to consider a ``classical version'' of the channel $\Pi(\mathcal{N})$ in a $M$-dimensional system with a fixed basis $\{\ket{m}\}_{m=0}^{M-1}$.
More precisely, we define
\begin{align}\label{Eq:ClassicalVersion}
\Theta_M\coloneqq\left\{\mathcal{E}_{{\rm de}|M}\circ\Pi(\cdot)\circ\mathcal{E}_{{\rm en}|M}\,\middle|\,\Pi\in\Theta,\mathcal{E}_{{\rm de}|M},\mathcal{E}_{{\rm en}|M}\right\}.
\end{align}
This set contains all realisations of $\Theta$'s members in $A$ such that we only care about diagonal terms in {\em both} input and output spaces.
For the given $\mathcal{N}$, the set $\{\Pi_M(\mathcal{N})\,|\,\Pi_M\in\Theta_M\}$ includes all realisations of $\mathcal{N}$ assisted by $\Theta$ in such a setting.
Hence, we call them {\em classical versions of $\mathcal{N}$ assisted by $\Theta$}, which are the relevant notions in information transmission.

The general input-output relation given by Eq.~\eqref{Eq:InputOutput} characterises a classical communication scenario assisted by the allowed structure $\Theta$, which is the {\em $\Theta$-assisted scenario} mentioned in the main text.
Hence, the one-shot $\Theta$-assisted classical capacity with error $\epsilon$ defined in Eq.~\eqref{Eq:Theta capacity}, namely, $C_{\Theta,(1)}^\epsilon(\mathcal{N})$, can be rewritten as the following maximisation:
\begin{align}\label{Eq:DefClassicalCapacity}
\max\{\log_2M\,|\,\Pi_M\in\Theta_M,P_s[\Pi_M(\mathcal{N})]\ge1-\epsilon\},
\end{align}
where
$
P_s[\Pi_M(\mathcal{N})]\coloneqq\sum_{m=0}^{M-1}\frac{1}{M}\bra{m}\Pi_M(\mathcal{N})(\proj{m})\ket{m}
$
is the averaged success probability.
See companion paper~\cite{CompanionPRA2} Section~II.A for a complete mathematical framework.

Note that certain choices of $\Theta$ can lead to trivial cases. For instance, if the superchannels can send an additional signal from the sender to the receiver, it trivialises the communication setup since one can always transfer the classical data through the member of $\Theta$ without touching the main channel. This also explains why physically $\Theta$ is usually assumed to be a subset of all non-signalling superchannels.

\subsection{Appendix II: Extractable work from correlation}
Now, we will briefly recall the work extraction scenario from \AA berg's approach~\cite{Aberg2013} (see also the companion paper~\cite{CompanionPRA2} Section~IV.A).
First, for a system with Hamiltonian $H$ in contact with a large heat bath in temperature $T$, the thermal equilibrium is described by the {\em thermal state} 
$
\gamma_H~\coloneqq~e^{-\frac{H}{k_BT}}/{\rm tr}\left(e^{-\frac{H}{k_BT}}\right),
$
where $k_B$ is the Boltzmann constant.
To address work extraction, suppose the system is in an energy-incoherent state $\rho$; namely, it satisfies $\left[\rho,\gamma_H\right]=0$. 
Then one can stay in the energy eigenbasis and treat both $\rho$ and $\gamma_H$ classically.
In particular, they can be viewed as random variables which output eigenenergies according to their probability distributions in the energy eigenbasis.
The goal is to apply certain physical processes to transform $\rho$ into some final state, usually thermal, and extract reusable energy, also known as {\em work}.
Intuitively, not every transformation is allowed.
In \AA berg's scenario~\cite{Aberg2013}, allowed processes are combinations of the following two kinds of operations.
The first is tuning the energy levels of the system's Hamiltonian without changing the state (isentropic process; also known as {\em quench}). 
The second is thermalisation, which will bring the input state to the thermal state with the corresponding Hamiltonian and a given background temperature $0<T<\infty$ (i.e., putting the system in contact with a large enough heat bath in this temperature so that the system reaches thermal equilibrium).
The energy gain during operations of the first kind defines extractable work, which is again treated as a random variable. 
Note that thermalisation is assumed to contribute no work gain.
Now, following Ref.~\cite{Aberg2013}, a random variable $X$ is said to have a {\em $(\epsilon,\delta)$-deterministic value $x$} if the probability to find $|X-x|\le\delta$ is strictly larger than $1-\epsilon$.
In other words, $\delta$ represents the precision that the random variable can take the value $x$, and $1-\epsilon$ is the probability of doing so.
Let $W_{\rm ext,(1)}^{(\epsilon,\delta)}\left(\rho,H\right)$ denote the highest possible $(\epsilon,\delta)$-deterministic extractable work among all allowed processes acting on initial state $\rho$ with the same initial and final Hamiltonian $H$.
Then the quantity $W_{\rm ext,(1)}^{\epsilon}\left(\rho,H\right)\coloneqq\lim_{\delta\to0}W_{\rm ext,(1)}^{(\epsilon,\delta)}\left(\rho,H\right)$, which is called the {\em one-shot $\epsilon$-deterministic extractable work of $\rho$ subject to $H$}, characterises the highest extractable work with arbitrarily well precision, up to a failure probability no greater than $\epsilon$.
It has been shown that, for $0<\epsilon\le1-1/\sqrt{2}$~\cite{Aberg2013},
\begin{align}\label{Eq:Aberg}
0\le \frac{W_{\rm ext,(1)}^\epsilon\left(\rho,H\right)}{k_BT\ln2} - D_0^\epsilon\left(\rho\,\|\,\gamma_H\right)\le \log_2\frac{1}{1-\epsilon},
\end{align}
where $D_0^\epsilon$ is the {\em $\epsilon$-smoothed relative R\'enyi $0$-entropy}~\cite{Aberg2013} that is defined for two commuting states $\rho=\sum_jq_j\proj{j},\eta=\sum_jr_j\proj{j}$ as
$
D_0^\epsilon\left(\rho\,\|\,\eta\right)\coloneqq\max_{\Lambda:\sum_{j\in\Lambda}q_j>1-\epsilon}\log_2\frac{1}{\sum_{j\in\Lambda}r_j}.
$

Now we adopt \AA berg's scenario to discuss work extraction from {\em classical correlation}.
With a given bipartite state $\rho_{AB}$ correlating $A$ and $B$, we want to know the extractable work {\em from its correlation only.}
Following the argument of Ref.~\cite{Perarnau-Llobet2015}, this can be done if we design the local Hamiltonians so that this state becomes locally thermal; i.e., its reduced states in $A$ and $B$ are both thermal states.
In this way, since no work can be extracted from thermal equilibrium [Eq.~\eqref{Eq:Aberg}], if there is any extractable work, it {\em must} come from the correlation.
Namely, we identify the extractable work from the correlation of a bipartite state $\rho_{AB}$ to be its extractable work {\em when local Hamiltonians make it locally thermal}.
When $\rho_{AB}$ is separable in the $AB$ bipartition and $[\rho_{AB},\rho_A\otimes\rho_B]=0$, one can use \AA berg's result Eq.~\eqref{Eq:Aberg} to show that the {\em one-shot \mbox{$\epsilon$-deterministic} extractable work from its correlation}, denoted by $W_{\rm corr,(1)}^\epsilon$, can be bounded as (again, $0<\epsilon\le1-1/\sqrt{2}$)
\begin{align}\label{Eq:WcorrState}
0\le \frac{W_{\rm corr,(1)}^\epsilon(\rho_{AB})}{k_BT\ln2} - D_0^\epsilon(\rho_{AB}\,\|\,\rho_A\otimes\rho_B)\le \log_2\frac{1}{1-\epsilon}.
\end{align}
Note that the quantity $W_{\rm corr,(1)}^\epsilon$ has temperature dependency, while we keep this implicit.

\subsection{Appendix III: Formal statement of the main result}
First, we introduce another thermodynamic task by imposing the following three additional constraints on the task defining $W^{\epsilon}_{\rm corr|\Theta,(1)}$ in Eq.~\eqref{Eq:ThermoTask}:
(i) Initial Hamiltonians are all fully degenerate.
(ii) Bipartite input $\eta_{AB}$ is always the maximally correlated classical state
$
\Phi_{AB} = \frac{1}{M}\sum_{m=0}^{M-1}\proj{m}_{A}\otimes\proj{m}_{B}
$. 
\mbox{(iii) $\norm{\Pi_M(\mathcal{N})\left({\id_A}/M\right) - {\id_A}/M}_1<2\epsilon$;} i.e., we only allow classical versions to generate informational non-equilibrium up to the order $O(\epsilon)$.
This task's genuinely transmitted energy is given by:
\begin{align}\label{Eq:AlternativeThermalTask}
&W_{\rm \Phi|\Theta,(1)}^\epsilon(\mathcal{N})\coloneqq\nonumber\\
&\quad\sup_{\substack{M\in\mathbb{N},\Pi_M\in\Theta_M\\\norm{\Pi_M(\mathcal{N})\left(\frac{{\id_A}}{M}\right) - \frac{{\id_A}}{M}}_1<2\epsilon}}W_{\rm corr,(1)}^\epsilon\left[(\Pi_M(\mathcal{N})_{A}\otimes\mathcal{I}_{B})(\Phi_{AB})\right].
\end{align}
Formally, Eq.~\eqref{Eq:MainResultInformalForm} can be stated as follows by using the two different tasks with figure-of-merits $W^{\epsilon}_{\rm corr|\Theta,(1)}$ and $W_{\rm \Phi|\Theta,(1)}^\epsilon$:
\begin{theorem}\label{Result:MainResult}
Consider a set of superchannels $\Theta$ and a background temperature $0~<~T~<~\infty$.
For a channel $\mathcal{N}$ and errors $0<\delta\le\omega<\epsilon\le1-1/\sqrt{2}$, we have that
\begin{align}
&W_{\rm corr|\Theta,(1)}^{\omega}(\mathcal{N})-k_BT\ln\frac{4\epsilon}{(\epsilon - \omega)^2(1-\omega)}\nonumber\\
&\quad\quad\quad\le(k_BT\ln2)C_{\Theta,(1)}^\epsilon(\mathcal{N})\le W_{\rm \Phi|\Theta,(1)}^{\epsilon+\delta}(\mathcal{N}).
\end{align}
\end{theorem}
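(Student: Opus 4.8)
The plan is to prove the two inequalities separately, both resting on the characterization in Eq.~\eqref{Eq:WcorrState}, which links $W_{\rm corr,(1)}^\epsilon$ of a classically correlated bipartite state to the smoothed quantity $D_0^\epsilon$ of that state against the product of its marginals. The structural observation I would exploit first is that every classical version $\Pi_M(\mathcal{N})\in\Theta_M$ carries a quantum-to-classical decoder $\mathcal{E}_{{\rm de}|M}$, so its output is diagonal in $\{\ket{m}\}$; hence for any classical input the state $(\Pi_M(\mathcal{N})\otimes\mathcal{I})(\eta_{MM'})$ is itself classical and commutes with the product of its marginals, so Eq.~\eqref{Eq:WcorrState} applies directly throughout. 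Writing $P(k|m)\coloneqq\bra{k}\Pi_M(\mathcal{N})(\proj{m})\ket{k}$ for the induced transition probabilities, I would phrase both directions in terms of the joint distribution these generate.

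For the upper bound $(k_BT\ln2)C_{\Theta,(1)}^\epsilon(\mathcal{N})\le W_{\rm \Phi|\Theta,(1)}^{\epsilon+\delta}(\mathcal{N})$, I would take an optimal code: an $M$ with $\log_2M=C_{\Theta,(1)}^\epsilon(\mathcal{N})$ and a $\Pi_M\in\Theta_M$ whose success probability $P_s=\frac1M\sum_m P(m|m)$ obeys $P_s\ge1-\epsilon$. Feeding $\Phi_{MM'}$ and setting $\sigma_{MM'}=(\Pi_M(\mathcal{N})\otimes\mathcal{I}_{M'})(\Phi_{MM'})$, two facts finish the argument. First, $\sigma_M=\Pi_M(\mathcal{N})(\id_M/M)$ is diagonal and a short column-sum estimate bounds its deviation by the total error, $\norm{\sigma_M-\id_M/M}_1\le 2(1-P_s)\le2\epsilon<2(\epsilon+\delta)$, so this configuration is admissible in the supremum defining $W_{\rm \Phi|\Theta,(1)}^{\epsilon+\delta}$. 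Second, choosing the diagonal index set $\Lambda=\{(m,m)\}_m$ in Eq.~\eqref{Eq:D0epsilon} gives joint weight $\sum_\Lambda\rho_{(m,m)}=P_s>1-(\epsilon+\delta)$, while the product marginals contribute exactly $\sum_m(\sigma_M)_{mm}/M=1/M$; hence $D_0^{\epsilon+\delta}(\sigma_{MM'}\,\|\,\sigma_M\otimes\sigma_{M'})\ge\log_2M$. The lower half of Eq.~\eqref{Eq:WcorrState} then yields $W_{\rm corr,(1)}^{\epsilon+\delta}(\sigma_{MM'})\ge(k_BT\ln2)\log_2M$, and since $\sigma_{MM'}$ is feasible for the supremum, the upper bound follows.

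For the lower bound I would start from a near-optimal configuration $(M,\eta_{MM'},\Pi_M)$ for $W_{\rm corr|\Theta,(1)}^{\omega}$ and apply the upper half of Eq.~\eqref{Eq:WcorrState} to the output $\rho_{MM'}$, obtaining $D_0^\omega(\rho_{MM'}\,\|\,\rho_M\otimes\rho_{M'})\ge \frac{W_{\rm corr|\Theta,(1)}^{\omega}(\mathcal{N})}{k_BT\ln2}-\log_2\frac1{1-\omega}$ up to vanishing slack. The heart is then a one-shot coding lemma converting the optimizing $\Lambda$ into a communication code: reading $\eta_{MM'}$ as a joint distribution $p(m,n)$, I would use the reference label $n$ as the message, the randomized encoder $m\sim p(\cdot|n)$ followed by $\Pi_M(\mathcal{N})$ as the physical channel (so the message-to-output statistics reproduce $\rho_{(k,n)}$), and a decoder that declares $i$ when $(k,c_i)\in\Lambda$ uniquely. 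A random codebook of size $N$ with codewords drawn from $\rho_{M'}$ has expected error below $\omega+N\,2^{-D_0^\omega}$, the first term from $\sum_\Lambda\rho_{(k,n)}>1-\omega$ and the second from $\sum_\Lambda\rho_M(k)\rho_{M'}(n)=2^{-D_0^\omega}$ by a union bound. Expurgating to maximal error $\epsilon$ retains more than $N\big(1-(\omega+N2^{-D_0^\omega})/\epsilon\big)$ messages, and optimizing over $N$ (the maximum sits at $N=\tfrac{\epsilon-\omega}{2}2^{D_0^\omega}$) leaves a code of size exceeding $\frac{(\epsilon-\omega)^2}{4\epsilon}2^{D_0^\omega}$ with maximal error at most $\epsilon$. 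Since the randomization, message restriction, and decoder are classical pre- and post-processing absorbed into fresh $\mathcal{E}_{\rm en},\mathcal{E}_{\rm de}$ while the superchannel $\Pi\in\Theta$ is untouched, the code is a legitimate element of $\Theta_{|\mathcal{S}|}$ in the sense of Eq.~\eqref{Eq:ClassicalVersion}, so $C_{\Theta,(1)}^\epsilon(\mathcal{N})\ge\log_2|\mathcal{S}|\ge D_0^\omega+\log_2\frac{(\epsilon-\omega)^2}{4\epsilon}$; substituting the $D_0^\omega$ bound gives exactly the claimed inequality.

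I expect the main obstacle to be this coding lemma: obtaining the stated constant $\frac{(\epsilon-\omega)^2}{4\epsilon}$ rather than a cruder $\frac{\epsilon-2\omega}{4}$ requires tracking the type-one and type-two errors jointly and performing the Markov expurgation at the \emph{optimal} codebook size, not at the naive ``discard the worse half'' step. The range restriction $0<\delta\le\omega<\epsilon\le1-\tfrac1{\sqrt2}$ is what keeps the smoothing parameters and error bookkeeping in the regime where the one-shot estimates underlying Eq.~\eqref{Eq:WcorrState} remain valid. A secondary but essential check is the reduction argument that the randomized-encoding protocol genuinely stays inside $\Theta_M$ for an \emph{arbitrary} allowed class $\Theta$; this must be argued at the level of Eq.~\eqref{Eq:ClassicalVersion} so that no extra assumption, such as convexity of $\Theta$, is smuggled in.
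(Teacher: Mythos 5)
Your proposal is correct, and the two halves relate to the paper's proof differently. The upper bound is essentially the paper's own argument (Fact~\ref{Fact:A1} plus the diagonal choice of $\Lambda$ in Eq.~\eqref{Eq:D0epsilon} applied to $(\Pi_M(\mathcal{N})\otimes\mathcal{I}_{M'})(\Phi_{MM'})$, then the lower half of Eq.~\eqref{Eq:WcorrState}), just written in the achievability direction rather than as a rewriting of the capacity optimization. The lower bound, however, is a genuinely different route. The paper proves Theorem~\ref{AppThm:CC} at the fully quantum level: codewords drawn from a separable input ensemble $\eta_{S_{\rm in}S'}$, a square-root-measurement decoder, and the Hayashi--Nagaoka operator inequality (Lemma~\ref{AppLemma:HayashiLemma}) with the parameter $c=\frac{\epsilon-\omega}{\epsilon+\omega}$ producing the constant $\frac{4\epsilon}{(\epsilon-\omega)^2}$; only afterwards does it restrict to classical inputs and use $D_0^\omega\le D_h^\omega$ and data processing to reach $W^{\omega}_{\rm corr|\Theta,(1)}$. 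You instead stay entirely classical: a threshold decoder built from the optimal index set $\Lambda$ of $D_0^\omega$, a union bound for the type-II term, and Markov expurgation performed at the optimal codebook size $N=\frac{\epsilon-\omega}{2}2^{D_0^\omega}$ --- and, pleasingly, this lands on the identical constant. Your route is more elementary (no operator inequality, no pretty-good measurement) and suffices for Theorem~\ref{Result:MainResult} because only the classical $D_0^\omega$ over classical bipartite inputs is needed there; what the paper's route buys is the stronger intermediate statement in terms of $D_h^\omega$ over quantum input ensembles (Theorems~\ref{AppThm:MainResult} and~\ref{AppThm:CC}), which is reused verbatim for the asymptotic results in Appendix~\ref{App:AsymptoticLimit} and which your classical argument would not deliver. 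Two loose ends worth tightening in a write-up: the optimal $N$ and the retained set size are not integers, a point the paper sidesteps cleanly by the contradiction argument at $L=2^{C_{(1)}^\epsilon(\mathcal{N})}+1$ rather than by direct construction; and your feasibility check for $\Theta_{|\mathcal{S}|}$ is indeed the right thing to verify --- it goes through for arbitrary $\Theta$ because the randomized encoder and the $\Lambda$-based decoder are absorbed into a single classical-to-quantum channel in ${\rm CQ}$ and a single POVM in ${\rm QC}$ while $\Pi\in\Theta$ is untouched, exactly as in Definition~\ref{AppDef:ClassicalVersion}.
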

We briefly sketch its proof here.
First, we obtain upper and lower bounds on $C_{\Theta,(1)}^\epsilon$ in terms of entropic quantities, which are summarised in companion paper Theorem~1~\cite{CompanionPRA2}.
After that, we relate these entropic bounds to $W_{\rm corr|\Theta,(1)}$ and $W_{\rm \Phi|\Theta,(1)}$ and complete the proof.
The full mathematical proof is given in companion paper Theorem~4~\cite{CompanionPRA2}, and here we focus on the implications of Theorem~\ref{Result:MainResult} to thermodynamics.

We comment that the upper bound is tight since all state-preparation channels achieve it.
In general, its attainability depends on the underlying set $\Theta$.
Also, since by definition $W_{\rm \Phi|\Theta,(1)}\le W_{\rm corr|\Theta,(1)}$, we immediately conclude that
{\em 
$W_{\rm \Phi|\Theta,(1)}\approx W_{\rm corr|\Theta,(1)} \approx (k_BT\ln2)C_{\Theta,(1)}$ up to one-shot error terms.
}
One can again apply the iid limit and obtain the asymptotic version:
{\em For a channel $\mathcal{N}$ and a background temperature \mbox{$0<T<\infty$}, we have that}
\begin{align}
(k_BT\ln2)C_\Theta(\mathcal{N}) = W_{\rm corr|\Theta}(\mathcal{N}) = W_{\Phi|\Theta}(\mathcal{N}),
\end{align}
where $W_{\Phi|\Theta}(\mathcal{N})\coloneqq\lim_{\epsilon\to0}\varliminf_{k\to\infty}\frac{1}{k}W_{\Phi|\Theta,(1)}^\epsilon\left(\mathcal{N}^{\otimes k}\right)$
can again be understood as the average transmitted energy in the asymptotic limit.
Finally, Theorem~\ref{Result:MainResult} provides a no-go result: {\em $W^{\epsilon}_{\rm corr|\Theta,(1)}$ and $W^{\epsilon}_{\rm \Phi|\Theta,(1)}$ are equivalent, up to one-shot error terms.}
Hence, the optimal transmitted energy $W^{\epsilon}_{\rm corr|\Theta,(1)}$ defined in Eq.~\eqref{Eq:ThermoTask} is actually achievable by using maximally correlated state $\Phi_{AB}$ as the bipartite input and classical versions $\Pi_M(\mathcal{N})$ that can hardly drive local systems out of thermal equilibrium---non-equilibrium effects cannot enhance the energy transmission, and no other classically correlated state can outperform $\Phi_{AB}$.
This also means that, in the present setting, arbitrary finite-energy initial Hamiltonians [as allowed in defining $W^\epsilon_{\rm corr|\Theta,(1)}$ in Eq.~\eqref{Eq:ThermoTask}] cannot outperform the fully degenerate ones [as required in defining $W^\epsilon_{\rm \Phi|\Theta,(1)}$ in Eq.~\eqref{Eq:AlternativeThermalTask}].

\bibliography{Ref.bib}

\end{document}